\newtheorem{theorem}{Theorem}
\newtheorem{lemma}{Lemma}
\newtheorem{ex}{Example}
\newenvironment{example-cont}[1]{\bigskip\noindent\textbf{Example~\ref{#1}.~(cont.)\hspace{\labelsep}}}{\bigskip\noindent}
\title{%
	Converting Nondeterministic Two-Way Automata into Small Deterministic Linear-Time Machines%
	\footnote{%
		This work contains, in an extended form, some material and results
		which were previously presented in a preliminary form
		in conference papers~\cite{Pru14} and~\cite{GPPP18}.%
	}%
}
\author[,1]{Bruno Guillon%
	\protect\footnoteemail{bruno.guillon@uca.fr}%
}
\author[,2]{Giovanni Pighizzini%
	\protect\footnoteemail{pighizzini@di.unimi.it}%
	\protect\footnotesupport{Partially supported by Gruppo Nazionale per il Calcolo Scientifico (GNCS-INdAM).}%
}
\author[,2]{Luca Prigioniero%
	\protect\footnoteemail{prigioniero@di.unimi.it}%
}
\author[,3]{Daniel Pr\r u\v sa%
	\protect\footnoteemail{prusapa1@cmp.felk.cvut.cz}%
	\protect\footnotesupport{Supported by the Czech Science Foundation, grant 19-21198S.}%
}
\affil[1]{LIMOS, Université Clermont-Auvergne, France}
\affil[2]{Dipartimento di Informatica, Universit\`a degli Studi di Milano, Italy}
\affil[3]{Faculty of Electrical Engineering, Czech Technical University, Prague}
\date{}
\begin{document}
\maketitle
\begin{abstract}
	\noindent
	In 1978 Sakoda and Sipser raised the question
	of the cost, in terms of size of representations,
	of the transformation of two-way and one-way nondeterministic automata 
	into equivalent two-way deterministic automata. 
	Despite all the attempts,
	the question has been answered only for particular cases
	(\eg, restrictions of the class of simulated automata or of the class of simulating automata).
	However the problem remains open in the general case,
	the best-known upper bound being exponential.
	We present a new approach
	in which unrestricted nondeterministic finite automata 
	are simulated by deterministic models extending two-way deterministic finite automata, 
	paying a polynomial increase of size only.
	Indeed, we study the costs of the conversions
	of nondeterministic finite automata
	into some variants of one-tape deterministic Turing machines working in linear time,
	namely Hennie machines, weight-reducing Turing machines, and weight-reducing Hennie machines.
	All these variants are known to share the same computational power:
	they characterize the class of regular languages.
\end{abstract}


\section{Introduction}
\label{sec:intro}

\emph{One-way deterministic finite automata} (\owdfa\s)
are the canonical acceptor for the class of regular languages.
By allowing nondeterministic transitions (\ownfa\s)
or/and movements of the head in both directions on the input tape,
so obtaining \emph{two-way deterministic} and \emph{nondeterministic finite automata} (\twdfa\s/\twnfa\s),
the computational power does not increase~\cite{RS59,She59}.
Other extensions of finite automata
have been proved to capture the same class of languages,
such as \emph{constant-height pushdown automata}~\cite{GMP10,GPP18},
\emph{straight-line programs}~\cite{GMP10},
\emph{$1$-limited automata}~\cite{WW86,PP14,PP19},
or, as will be of interest for this work,
\emph{linear-time one-tape Turing machines}~\cite{Hen65,Pru14,GPPP18,GPPP21a}.%
\footnote{%
	Actually, the model considered by Hennie was deterministic.
	Several extensions of this result,
	including that to the nondeterministic case
	and greater time lower bounds for nonregular language recognition,
	have been stated in the literature~\cite{Tr64,Har68,Mic91,Pig09,TYL10}.%
}

A natural question concerning models
that share the same computational power
is the comparison of the sizes of their descriptions.
In particular, the cost of the elimination of nondeterminism is a standard problem.
For instance, it is a classical result
that an exponential increase in size
is sufficient and, in the worst case, necessary for the conversion of \ownfa\s to \owdfa\s~\cite{RS59}.
However,
already for two-way automata,
the famous Sakoda and Sipser question
concerning the size blowups from \ownfa\s or \twnfa\s to \twdfa\s
is a much more intricate problem.
For both conversions, Sakoda and Sipser conjectured
that the costs are exponential~\cite{Sak78}. 
The question has been solved in some special cases
that can be grouped in three classes:
by considering restrictions on the simulating machines
(\eg, \emph{sweeping}~\cite{Sip80}, \emph{oblivious}~\cite{HS03}, \emph{few reversals} \twdfa\s~\cite{Kap13}),
by considering restrictions on languages
(\eg, \emph{unary case}~\cite{GMP03}),
by considering restrictions on the simulated machines
(\eg, \emph{outer-nondeterministic} automata~\cite{GGP14,KP15}).
However, in spite of all attempts,
in the general case the question remains open
(for further references see~\cite{Pig13}).
Here, we consider a different approach:
in order to obtain a polynomial simulation,
we enlarge the family of simulating machines.
To this end, we study size blowups for the conversion of \ownfa\s and \twnfa\s
into several variants of \emph{linear-time one-tape deterministic Turing machines},
which all characterize regular languages.
These variants and their properties have been investigated in~\cite{GPPP18,GPPP21a}.
We now give a short description of them (see \cref{fig:machines}).
\begin{figure}[tb]
	\centering
	\includegraphics[width=\textwidth]{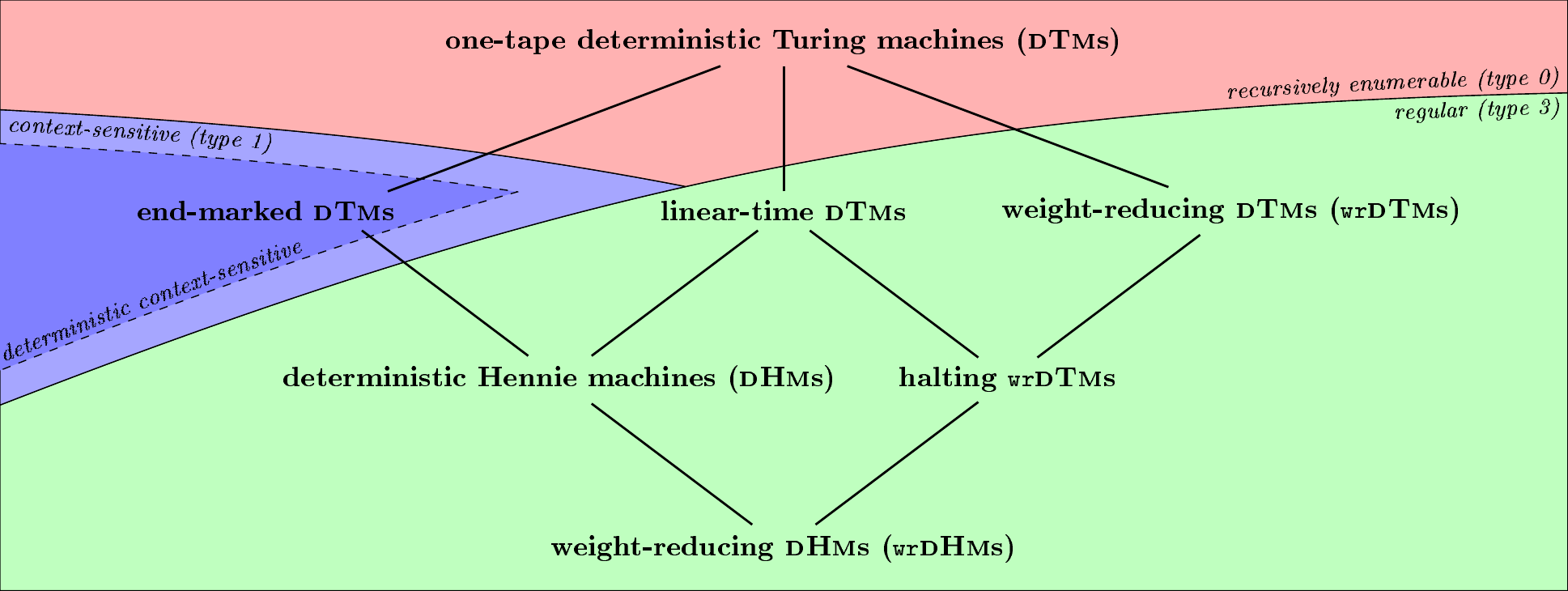}
	\caption{%
		Variants of one-tape deterministic Turing machines
		and their expressive power confronted with the Chomsky hierarchy.
		In particular,
		end-marked \dtm\s are known as \emph{deterministic linear bounded automata} in the literature,
		and recognize the so-called \emph{deterministic context-sensitive languages},
		a subclass of context-sensitive languages,
		see~\cite{Wal70}.
		It is still unknown if such inclusion is strict.
	}
	\label{fig:machines}
\end{figure}
\medbreak

As proven by Hennie,
linear-time one-tape deterministic Turing machines
recognize regular languages only~\cite{Hen65}.
However,
it cannot be decided whether or not a one-tape Turing machine actually works in linear time.
This negative result remains
true in the restricted case of \emph{end-marked machines}, 
namely one-tape deterministic Turing machines that do not have any extra space,
besides the tape portion which initially contains the input.
End-marked machines working in linear time will be called deterministic \emph{Hennie machines}.
To overcome the above-mentioned ``negative'' results, 
a syntactical restriction on deterministic one-tape Turing machines,
called \emph{weight-reducing machines},
has been considered.
This restriction enforces computations
to either be infinite
or to halt within a linear number of steps in the input length. 
In contrast to Hennie machines,
this model benefits from nice properties.
In particular, it can be decided whether a one-tape Turing machine is weight-reducing,
and whether a weight-reducing Turing machine is halting whence works in linear time.
Furthermore, haltingness of the model can always be obtained
paying a polynomial size increase only.
However, its space is not limited to the portion of the tape
that contains the input at the beginning of the computation,
namely the device is not end-marked.
Considering end-marked linear-time machines
satisfying the syntactical restriction of weight-reducing machines,
we obtain \emph{weight-reducing Hennie machines}. 
\medbreak

Our main result is
that each \twnfa~\mA can be simulated
by a one-tape deterministic Turing machine
which works in linear time
(with respect to the input length)
and which has polynomial size
with respect to the size of~\mA.
We point out that the resulting machine
can use extra space,
besides the tape segment which initially contains the input.
Next, the machine is halting and \emph{weight-reducing}, thus implying
a linear execution time.
Hence, nondeterminism can be eliminated
with at most a polynomial size increase,
obtaining a linear execution time in the input length,
and provided the ability to rewrite tape cells
and to use some extra space.

We then investigate
what happens when the latter capability is removed,
namely if the machine does not have any further tape storage,
\ie, it is a Hennie machine.
We prove that even under this restriction
it is still possible to obtain a machine of polynomial size,
that is each \twnfa can be transformed
into an equivalent Hennie machine of polynomial size.
However,
the machine resulting from our construction
is not weight reducing,
unless we require that it agrees with the given \twnfa
only on sufficiently long inputs.
We do not have this problem in the unary case,
namely for a one-letter input alphabet,
where we prove that each unary \twnfa can be simulated
by a weight-reducing Hennie machine of polynomial size.
Similar results are obtained for the transformation of
\ownfa\s into variants of one-tape deterministic machines.
\medbreak

The paper is organized as follows.
In \cref{sec:prel} we present some fundamental notions and definitions,
included those related to the computational models we are interested in,
and we state some basic properties.
In \cref{sec:2nfa->wrotm} we present our main simulation result:
each $n$-state \twnfa can be transformed into an equivalent halting weight-reducing machine of size polynomial in~$n$.
In \cref{sec:2nfa->dhm} we discuss how the simulation changes
if the resulting machine is required to be a Hennie machine.
Finally, in \cref{sec:1nfa->dhm}
we revise the results of \cref{sec:2nfa->wrotm,sec:2nfa->dhm}
under the assumption that the simulated automata are \emph{one-way} instead of being two-way.

\section{Preliminaries}%
\label{sec:prel}

In this section we recall some basic definitions and notations.
We also describe the main computational models considered in the paper
and we give some preliminary results.

We assume the reader familiar
with notions from formal languages
and automata theory
(see, \eg,~\cite{HU79}).
Given a set~$S$,
\defd{$\card S$} denotes its cardinality
and~\defd{$\powerset S$} the family of all its subsets.
Given an alphabet~\alphab,
\defd{$\length w$} denotes the length of a string~$w\in\alphab^*$,
\defd{$\ith w$} denotes the~$i$\=/th symbol of~$w$, $i=1,\ldots,\length w$,
and~\defd{\emptyword} denotes the empty string.
\smallbreak

\emph{Finite automata}
are computational devices
equipped with a finite control and a finite read-only tape
which is scanned by an input head.
A \defd{one-way nondeterministic finite automaton}
(\defd\ownfa)
is defined as a quintuple~$\mA=\struct{Q,\alphab,\delta,q_0,F}$,
where~$Q$ is a finite \defd{set of states},
\ialphab~is a finite \defd{input alphabet},
$q_0\in Q$ is the \defd{initial state},
$F\subseteq Q$ is a \defd{set of final states},
and~%
$%
\delta:%
Q\times\ialphab%
\rightarrow%
\powerset{Q}%
$
is a \defd{nondeterministic transition function}.
At each step,
according to its current state~$p$
and the symbol~$\sigma$ scanned by the head,%
~\mA enters one nondeterministically-chosen state from~$\delta(p,\sigma)$
and moves the input head rightward to the next symbol.
The machine \defd{accepts} the input
if there exists a computation
starting from the initial state~$q_0$
with the head on the leftmost input symbol
and ending in a final state $q\in F$
after having read the whole input.
%
%
A \ownfa~\mA is said to be \defd{deterministic} (\defd\owdfa),
whenever~$\card{\delta(q,\sigma)}\leq1$,
for any~$q\in Q$ and
$\sigma\in\ialphab$.

Providing \ownfa\s (\resp, \owdfa\s)
with the ability of moving the head back and forth,
we obtain \defd{two-way nondeterministic} (\resp, \defd{deterministic}) \defd{finite automata}
(\defd{\twnfa\s}, \resp, \defd{\twdfa\s}).
They are defined by extending the transition function
so that a left (\lmove) or right (\rmove) head direction is indicated in each instruction.
Furthermore, to prevent the head to fall out the input,
the device is \defd{end-marked}, in the following sense.
Two special symbols~\defd{\lend} and~\defd{\rend}
not belonging to~$\ialphab$,
called the \defd{left} and the \defd{right endmarker}, respectively,
surround each input word,
and enforce the computation to stay between them
(except at the end of computation when accepting, as described below).
More precisely, on input~$w$
the tape contains~$\lend w\rend$,
the left endmarker being at position~$0$
and the right endmarker being at position~$\length w+1$.
By~$\ialphab_{\lend,\rend}$ we denote the set~$\ialphab\cup\set{\lend,\rend}$.
Formally, the transition function of a two-way automaton is
$
\delta:
Q\times\ialphab_{\lend,\rend}%
\rightarrow%
\powerset{Q\times\moveset}%
$
such that,
for each transition~$(q,\xmove)\in\delta(p,\sigma)$,
if~$\sigma=\lend$ then~$\xmove=\rmove$,
and if~$\sigma=\rend$ then~$\xmove=\lmove$ or~$q\in F$.
In this way,
the head cannot violate the endmarkers,
except at the end of computation to accept.
The machine \defd{accepts} the input
if there exists a computation
starting from the initial state~$q_0$
with the head on the $0$-th tape cell
(\ie, scanning the left endmarker)
and ending in a final state $q\in F$
after violating the right endmarker.

The other main computational model we consider
is the \defd{deterministic one-tape Turing machine} (\dtm).
Such a machine
is a tuple~$\struct{Q,\ialphab,\walphab,\delta,q_0,F}$
where~$Q$ is the \defd{set of states},
\ialphab is the \defd{input alphabet},
\walphab is the \defd{working alphabet}
including both \ialphab
and the special \defd{blank symbol}, denoted by~$\tblank$,
that cannot be written by the machine,
$q_0\in Q$ is the \defd{initial state},
$F\subseteq Q$ is the \defd{set of final states},
and~%
$\delta: Q\times\walphab\to Q\times(\walphab\setminus\set\tblank)\times\moveset$
is the partial \defd{deterministic transition function}.
In one step,
depending on its current state~$p$
and on the symbol~$\sigma$ read by the head,
a \dtm changes its state to~$q$,
overwrites the corresponding tape cell with~$\tau$
and moves the head one cell to the left when~$\xmove=\lmove$ or to the right when~$\xmove=\lmove$,
if~$\delta(p,\sigma)=(q,\tau,\xmove)$.
Since~$\delta$ is partial,
it may happen that no transition can be applied.
In this case, we say that the machine \defd{halts}.
At the beginning of computation
the input string~$w$ resides on a segment of a bi-infinite tape,
called \defd{initial segment},
and the remaining infinity of cells contain the blank symbol.
The computation over~$w$
starts in the initial state
with the head scanning the leftmost non-blank symbol.
The input is \defd{accepted}
if the machine eventually halts in a final state.

Let~$\mA=\struct{Q,\ialphab,\walphab,\delta,q_0,F}$ be a \dtm.
A \defd{configuration} of~\mA
is given by the current state,
the tape contents,
and the position of the head.
If the head is scanning a non-blank symbol,
we describe it by~\defd{\config zqu}
where~$zu\in\walphab^*$ is the finite non-blank contents of the tape,
$u\neq\emptyword$,
and the head is scanning the first symbol of~$u$.
Otherwise, we describe it by~\defd{\config{}{q}{\tblank z}} or~\defd{\config zq{}}
according to whether the head is scanning
the first blank symbol to the left or to the right of the non-blank tape contents~$z$, respectively.
If, from a configuration~$\config zqu$
the device may enter in one step a configuration~$\config{z'}{q'}{u'}$,
we say that~$\config{z'}{q'}{u'}$ is \defd{a successor} of~$\config zqu$,
denoted~$\config zqu\yields\config{z'}{q'}{u'}$.
A \defd{halting configuration} is a configuration that has no successor.
The reflexive and transitive closure of~$\yields$
is denoted by~\defd{\yields*}.
On an input string~$w\in\ialphab^*$,
the \defd{initial configuration} is~$\config{}{q_0}{w}$.
An \defd{accepting configuration} is a halting configuration~$\config{z}{q_f}{u}$
such that~$q_f$ is a final state of the machine.
A \defd{computation} is a (possibly infinite) sequence of successive configurations.
It is \defd{accepting} if it is finite,
its first configuration is initial,
and its last configuration is accepting.
Therefore,~%
\[
	\langof\mA=\set{
		w\in\ialphab^*
		\mid
			\config{q_0}{w}{}
			\yields*
			\config z{q_f}u,\,
			\text{where $q_f\in F$ and $\config z{q_f}u$ is halting}
		}%
	\text.
\]
We say that two machines~\mTM and~\mHM \emph{agree} on some language~$L$
if every string in~$L$ is accepted by~\mTM \ifof it is accepted by~\mHM.

The notions of configurations, successors, computations, and halting configurations
naturally transfer to one-way and two-way finite automata.

In the paper we consider the following restrictions of \dtm\s.
\begin{description}
	\item[End-marked machines.]
		We say that a \dtm is \defd{end-marked},
		if at the beginning of the computation the input string is surrounded
		by two special symbols belonging to~$\walphab$,
		\defd{\lend} and~\defd{\rend} respectively,
		called the \emph{left} and the \emph{right endmarkers},
		which can never be overwritten,
		and that prevent the head to fall out the tape portion
		that initially contains the input.
		Formally, for each transition~$\delta(p,\sigma)=(q,\tau,\xmove)$,
		$\sigma=\lend$ (\resp,~$\sigma=\rend$) implies%
		~$\tau=\sigma$ and~$\xmove=\rmove$ (\resp,~$\xmove=\lmove$).
		This is the deterministic restriction
		of the well-known \emph{linear-bounded automata}%
		~\cite{Kur64}.
		For end-marked machines,
		the initial configuration on input~$w$
		is~$\config{}{q_0}{\lend w\rend}$.
	\item[Weight-reducing Turing machines.]
		A \dtm is \defd{weight-reducing} (\defd\wrdtm),
		if there exists a partial order~$<$ on~\walphab
		such that each rewriting is decreasing,
		\ie, $\delta(p,\sigma)=(q,\tau,\xmove)$ implies~$\tau<\sigma$.
		By this condition, in a \wrdtm the number of visits to each tape cell
		is bounded by a constant. However, one \wrdtm could have non-halting
		computations which, hence, necessarily visit infinitely many tape cells.
	\item[Linear-time Turing machines.]
		A \dtm is said to be \defd{linear-time} if
		over each input~$w$,
		its computation halts within~$\bigoof{\length w}$ steps.
	\item[Hennie machines.]
		A \defd{Hennie machine} (\defd\dhm)  
		is a linear-time \dtm which is,
		furthermore, end-marked.
	\item[Weight-Reducing Hennie machines.]
		By combining previous conditions,
		\defd{weight-reducing Hennie machines} (\defd\wrdhm) are defined as particular \dhm,
		for which there exists an order~$<$
		over~$\walphab\setminus\set{\lend,\rend}$
		such that $\delta(p,\sigma)=(q,\tau,\xmove)$ implies~$\tau<\sigma$
		unless $\sigma\in\set{\lend,\rend}$.
		Observe that each end-marked \wrdtm can execute 
		a number of steps which is at most linear in the length of the input.
		Hence,
		end-marked \wrdtm
		are necessarily weight-reducing Hennie machines.
\end{description}

The \defd{size} of a machine
is given by the total number of symbols used to write down its description.
Therefore, the size of a one-tape Turing machine is bounded by a polynomial
in the number of states and of working symbols,
namely, it is~$\thetaof{\card Q\cdot\card\walphab\cdot\log(\card Q\cdot\card\walphab)}$.
In the case of nondeterministic (\resp, deterministic) finite automata,
since no writings are allowed
and hence the working alphabet is not provided,
the size is linear in the number of instructions and states,
which is bounded by a function quadratic (\resp, subquadratic) in the number of states
and linear in the number of input symbols,
namely, it is~$\thetaof{\card\ialphab\cdot\card Q^2}$
(\resp, $\thetaof{\card\ialphab\cdot\card Q\cdot\log(\card Q)}$).
\bigbreak

We now state two preliminary results
that will be used in the subsequent sections
for building weight-reducing Turing machines
and weight-reducing Hennie machines,
respectively.

It is known that a \dtm works in linear time
\ifof there exists a constant~$k$
such that in no computation
the head visits each tape cell more than~$k$ times~\cite{Hen65}.
The following lemma states that,
whenever~$k$ is known,
a weight-reducing Turing machine
can be obtained,
augmenting linearly the working alphabet only.
Indeed, we can enforce the machine
to store on each tape cell
the number of further visits the head is allowed to perform on the cell.
As this number decreases, the overwriting is decreasing.
Using~$k$ copies of each working symbols is enough to implement this.
Therefore, in order to define a \wrdtm,
it is sufficient to define a \dtm
and to provide a constant~$k$
bounding the number of visits of each tape cell.
\begin{lemma}[\cite{GPPP21a}]
	\label{lemma:constantly-many-visits}
	Let $\mTM=\struct{Q,\Sigma,\Gamma,\delta,q_0,F}$ be a 
	\dtm such that, for any input, \mTM performs at most $k$ computation steps on each tape cell.
	Then there is a \wrdtm $\mA$ accepting $L(\mTM)$
	with the same set of states~$Q$ as~\mTM
	and working alphabet of size $\bigoof{k\cdot\card\Gamma}$.
	Furthermore,
	on each input \mA uses the same space as~\mTM.
	Hence,
	if~\mTM is linear time or end-marked
	then so is~\mA.
\end{lemma}

Weight-reducing Turing machines extend weight-reducing Hennie machines
by allowing the use of some extra space besides the portion that initially contains the input.
Indeed, the former model can use a bi-infinite tape while the latter is end-marked.
However, it has been shown that
every finite computation of a \wrtm uses a constant amount of this extra space~\cite{GPPP21a}.
We do not know whether the use of this extra space can be avoided in general,
while keeping the weight-reducing property and bounding the size increase by a polynomial.
Nevertheless, this can be achieved when the inputs are long enough.
\begin{lemma}
	\label{lem:wrTm->wrHm}
	Let~\mTM be a weight-reducing Turing machine
	which uses at most~$C$ initially-blank cells
	in every halting computation.
	Then, there exists a weight-reducing Hennie machine~\mHM
	of size polynomial in the size of~\mTM,
	which agrees with~\mTM
	on every input of length at least~$C$.
\end{lemma}
\begin{proof}
	The idea of the proof
	is the same as the folkloric
	simulation of Turing machines working on a bi-infinite tape
	by Turing machines working on a semi-infinite tape.
	Indeed, we fold the portion of the tape occurring
	to the left of the initial segment
	onto the complementary portion of the tape,
	thus creating a second track.
	Similarly, we can fold the portion of the tape
	occurring to the right of the initial segment
	onto the complementary portion of the tape.
	Next, as~\mTM uses at most~$C$ initially-blank cells in total,
	and providing the input has length at least~$C$,
	we observe that the additional tracks do not overlap
	hence only one additional track is sufficient for the simulation.
	On shorter inputs, the simulation could fail
	giving an outcome different from those of~\mTM.
	Doing so,
	we obtain a \dhm~\mHM
	that uses twice the number of states of \mTM
	(two copies of each state of~\mTM for indicating which track should be read),
	and the working alphabet~$\walphab_{\mHM}=\walphab\cup\walphab^2$
	where~$\walphab$ is the working alphabet of~\mTM.
	In particular, the size of~\mHM is polynomial
	in the size of~\mTM.
	Finally,
	we can extend the order~$<_{\mTM}$ on~$\walphab$
	witnessing that~\mTM is weight-reducing,
	to an order~$<_{\mHM}$ on~$\walphab_{\mHM}$
	witnessing that~\mHM is weight-reducing.
\end{proof}

\section{Simulating Two-way Automata by Weight-reducing Machines}
\label{sec:2nfa->wrotm}

This section is devoted to present our main simulation:
we show that every \twnfa
$\mA=\struct{Q,\ialphab,\delta,q_0,F}$
can be transformed into an equivalent \wrdtm
of size polynomial in the size of~\mA.
Our construction is based on the classical simulation
of \twnfa\s by \owdfa\s,
inspired from Shepherdson's construction~\cite{She59}.
The main idea is to perform forward moves,
while updating a table of size~$n=\card Q^2$
that describes parts of computations
which may occur to the left of the current position.
In parallel,
an adaptation of the classical powerset construction
for converting \ownfa\s into \owdfa\s
is done,
in such a way that the set of states
that are accessible from the initial configuration
when visiting for the first time the current head position
is updated at each move.
In the simulation by \owdfa\s,
the table and the set
are stored on the finite state control.
In our simulation by \wrdtm\s
they will be written,
under a suitable encoding,
in~$\bigoof{n}$ many tape cells.

To describe computation paths
that occur on some restricted part of the tape,
we define \defd{partial configurations},
by relaxing the definition of configurations,
as strings~$\config xqy$
where~$q$ is a state
and ${xy\in\set{\lend,\emptyword}\cdot\ialphab^*\cdot\set{\rend,\emptyword}}$
is a factor of the tape content.
The successor relation~\yields on configurations
extends onto partial configurations.
In particular,
$\config zpX\yields*\config{zX}q{}$ with~$\length X=1$
means that there exists a computation path
\begin{itemize}
\item starting from the rightmost position of~$zX$ (with the head scanning the symbol~$X$) in state~$p$,
\item ending while entering the cell to the right of this position in state~$q$, and
\item which visits only cells
from the part of the tape containing~$zX$
in the meantime.
\end{itemize}
By storing in a set~$\tau_{zX}$ the pairs of states~$(p,q)$
such that such a computation path exists,
we save the possible behaviors of~\mA
when moving backward from the current position.
Indeed, since acceptance is made after violating the right endmarker,
from such a point
the device should eventually turn back forward from the current position,
in order to reach an accepting configuration.
We are going to give the formal definition of the set~\defd{$\tau_{zX}$} for a prefix~$zX$ of the tape content,
together with the definition of the set~\defd{$\gamma_{zX}$}
of states that are reachable from the initial configuration,
when visiting for the first time
the position to the right of the part containing~$zX$.
Formally, for a prefix~$zX\in\set\lend\cdot\ialphab^*\cdot\set{\emptyword,\rend}$
of the tape content with~$\length X=1$:
\begin{align*}
	\tau_{zX}&=\set{(p,q)\in Q\times Q\mid\,\config zpX\yields*\config{zX}q{}\,}
	\text{, and}\\
	\gamma_{zX}&=\set{r\in Q\mid\,\config{}{q_0}{zX}\yields*\config{zX}r{}\,}
	\text.
\end{align*}
Observe that a word~$w\in\ialphab^*$ is accepted by~\mA
\ifof
${F\cap\gamma_{\lend w\rend}\neq\emptyset}$.
In order to simulate~\mA on input~$w$,
it is thus sufficient to incrementally compute~$\gamma_z$
for each prefix~$z$ of~$\lend w\rend$.
To do so,
we will keep updated the table~$\tau_z$ as well.
Indeed, given~$\gamma_z$, $\tau_z$ and a symbol~$\sigma$,
it is possible to compute~$\gamma_{z\sigma}$ and~$\tau_{z\sigma}$.
This is achieved by observing that (see \cref{fig:tau}):
\begin{enumerate}
	\item
		$(p,q)\in\tau_{z\sigma}$
		if and only if
		there exists a sequence~$r_0,s_0,r_1,s_1,\ldots,r_\ell\in Q$,
		with~$\ell\geq0$, satisfying:
		\begin{itemize}
		\item $r_0=p$,
		\item $(q,\rmove)\in\delta(r_\ell,\sigma)$, and
		\item $(s_i,\lmove)\in\delta(r_i,\sigma)$
		and $(s_i,r_{i+1})\in\tau_z$, for~$i=0,\ldots,\ell-1$.
		\end{itemize}
	\item
		$q\in\gamma_{z\sigma}$
		\ifof
		there exists~$p\in\gamma_z$
		such that~$(p,q)\in\tau_{z\sigma}$.
\end{enumerate}
\begin{figure}[tb]
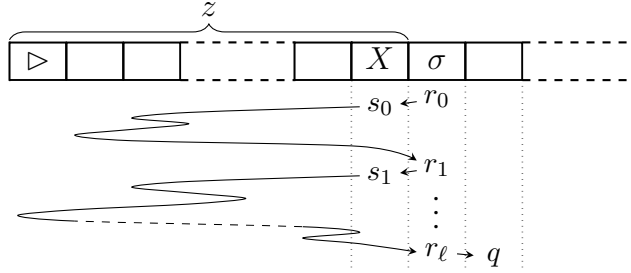

	\centering
	\includestandalone[mode=tex,draft=false]{tau}
	\caption{%
		A computation path from~$p=r_0$ to~$q$
		giving~$(p,q)\in\tau_{z\sigma}$.
		For each~$i$, $(s_i,r_{i+1})\in\tau_z$
		and $(s_i,\lmove)\in\delta(r_i,\sigma)$,
		while~$(q,\rmove)\in\delta(r_\ell,\sigma)$.
	}
	\label{fig:tau}
\end{figure}
We represent a pair~$(\gamma_z,\tau_z)$
as a word~$uv$ in~$\set{0,1}^*$
with~${\length u=n}$ and~${\length v=n^2}$.
Each bit of~$u$ (\resp,~$v$)
indicates the membership of some state~$p$
(\resp, some pair~$(p,q)$ of states)
to the set~$\gamma_z$ (\resp,~$\tau_z$)
through an implicitly fixed bijection
from~$Q$ to~$\set{1,\ldots,n}$
(\resp, from~$Q^2$ to~$\set{1,\ldots,n^2}$).
For each input symbol~$\sigma$,
there exists a halting \dhm~\mTM[\sigma]
which computes~$(\gamma_{z\sigma}, \tau_{z\sigma})$
from~$(\gamma_z, \tau_z)$
in the following sense.
On input~$uv\in\set{0,1}^{n+n^2}$ encoding~$(\gamma_z,\tau_z)$,
\mTM[\sigma] ends the computation
with the tape containing the encoding~$u'v'\in\set{0,1}^{n+n^2}$ of~$(\gamma_{z\sigma},\tau_{z\sigma})$.
Notice that this computation does not depend on the entire~$z$,
which, indeed, is not given to~\mTM[\sigma],
but only on the information on~$z$ stored in~$\gamma_z$ and~$\tau_z$ which are given in input.

\begin{lemma}
	\label{lem:update-tables}
	For each~$\sigma\in\ialphab\cup\set\rend$,
	there exists a halting \dhm~\mTM[\sigma]
	with~$\bigoof{n^6}$ states
	and~$\bigoof{1}$ working symbols
	that on input~$(\gamma_z,\tau_z)$
	halts with the tape containing~$(\gamma_{z\sigma},\tau_{z\sigma})$
	after~$\bigoof{n^5}$ visits to each cell.
	The input and the output are represented on the tape as strings in~$\set{0,1}^{n+n^2}$.
\end{lemma}
\begin{proof}
	Fixed~$\sigma$, let~$uv$ be the input string encoding the pair of tables~$(\gamma_z,\tau_z)$,
	of size~$n$ and~$n^2$ respectively.
	In order to update them,%
	~\mTM[\sigma] uses a second track on the tape,
	on which it will progressively build the updated tables.
	At the end of the computation,
	namely when the updated tables have been determined
	and written down over the second track,
	the device performs a projection of the tape on its second track,
	in order to produce the correct output,
	and halts.
	
	We fix the working alphabet~$\walphab=\set{0,1}\,\cup\,\set{0,1}^2$.
	The ``simple'' symbols from~$\set{0,1}$
	are used only for the input and the output of~\mTM[\sigma].
	From now on,
	we suppose that the tape contains only symbols
	from the $2$-track alphabet part,
	\ie, the right side of the union.
	Moreover,
	since the length of the input is fixed and it is~$n+n^2$,
	we can suppose that~\mTM[\sigma]
	keeps updated a state component of size~$n+n^2$
	which always stores the position of its head on the tape.
	This allows it to navigate over the tables.

	We divide the tape into two parts:
	a prefix~$\overline u$ of length~$n$
	(thus covering the factor~$u$ which encodes~$\gamma_z$ on its first track)
	and a suffix~$\overline v$ of length~$n^2$
	(thus covering the factor~$v$ which encodes~$\tau_z$ on its first track).
	As previously explained,
	the updated table~$\gamma_{z\sigma}$
	can easily be obtained
	once the updated table~$\tau_{z\sigma}$ has been computed.
	Hence, we first show how to write the table~$\tau_{z\sigma}$
	on the second track of~$\overline v$.
	This is achieved using the space~$n$ available
	on the second track of~$\overline u$
	as temporary memory,
	to which we refer
	as \emph{temporary table}.
	\smallbreak

	As observed above (see~\cref{fig:tau}),
	a computation path
	on the segment containing~$z\sigma$
	starting from the rightmost position of the segment
	and exiting the segment to the right at its last step,
	\ie, a computation of the form~$\config zp\sigma\yields*\config{z\sigma}q{}$,
	can be decomposed into an alternation
	of computation paths
	on the segment containing~$z$
	(described by the table~$\tau_z$)
	and of backward computation steps on~$\sigma$ connecting these paths,
	followed by a last forward computation step on~$\sigma$
	that exits the segment.
	For each state~$p$,
	in order to decide which pairs~$(p,q)$ belong to~$\tau_{z\sigma}$,
	the machine~\mTM[\sigma] first computes
	the set~\defd{$Z_p$} of states
	that are reachable at the rightmost position
	of the segment containing~$z\sigma$,
	from the state~$p$ at the same position,
	by visiting only cells from the segment, \ie,
	\begin{equation*}
		Z_p=\set{r\mid \config zp\sigma\yields*\config zr\sigma}
		\text.
	\end{equation*}
	Thus, a pair~$(p,q)$ belongs to~$\tau_{z\sigma}$
	\ifof for some~$r\in Z_p$,
	we have ${(q,\rmove)\in\delta(r,\sigma)}$.
	For a fixed~$p$,%
	~\mTM[\sigma] can incrementally construct~$Z_p$
	on the temporary table as follows.
	Initially,
	all the cells from the table
	are unmarked (\ie, contain~$0$)
	except the one corresponding to state~$p$
	which contains~$1$.
	The update process behaves as follows:
	for each state~$r$ corresponding to a marked cell,
	each state~$s$ such that~$(s,\lmove)\in\delta(r,\sigma)$,
	and each state~$r'$ such that~$(s,r')\in\tau_z$,
	the machine marks the cell corresponding to~$r'$ with~$1$
	in the temporary table.
	Since~$Z_p$ has size bounded by~$n$,
	after at most~$n$ passes,
	the temporary table is not modified anymore
	and contains exactly an encoding of~$Z_p$.

	So done, computing the set~$Z_p$
	uses only a polynomial number of states in~$n$.
	This is however not sufficient
	to get a weight-reducing machine of polynomial size.
	To this end, using \cref{lemma:constantly-many-visits},
	we should indeed prove that
	the number of visits to each cell
	is bounded by some polynomial in~$n$.
	To update~$Z_p$,
	three nested loops on states,
	namely on~$r$, $s$, and~$r'$,
	are used.
	Once such a triple is fixed,
	the machine navigates on the tape
	in order to check
	that~%
	$r$ is currently marked in the temporary table,
	and~%
	$(s,r')\in\tau_z$
	(%
		notice that the condition~$(s,\lmove)\in\delta(r,\sigma)$
		is verified in constant time,
		since~$\sigma$ is fixed%
	).
	These two conditions require to read the corresponding cells
	in the temporary table (on~$\overline u$)
	and in the table~$\tau_z$ (on~$\overline v$),
	respectively.
	This can be performed by visiting each tape cell at most twice.
	Marking the cell corresponding to~$r'$
	also implies to scan the tape part~$\overline u$ twice.
	As the operation is repeated for each triple,
	we obtain that the number of visits to each cell
	in a pass for updating~$Z_p$
	is~$\bigoof{n^3}$.
	Since the number of passes is at most~$n$,
	the total number of visits to each cell is~$\bigoof{n^4}$.
	Moreover,
	this operation can be implemented by using~$\bigoof{n^3}$ states
	because
	we just need to remember the values of~$r$, $s$, and $r'$,
	but not the number of the pass:
	it is sufficient to use a bit
	to remember if during the last pass at least one state has been added to~$Z_p$.
	This is because if no states are added during a pass,
	then no states will be added executing further iterations.

	Once~$Z_p$ has been computed,
	for each state~$r$ corresponding to a marked cell in the temporary table,
	and each state~$q$ such that ${(q,\rmove)\in\delta(r,\sigma)}$,
	\mTM[\sigma] adds
	the pair~$(p,q)$
	to the table~$\tau_{z\sigma}$
	represented on the second track of~$\overline v$.
	This requires to visit~$\bigoof n$ times each tape cell
	and can be performed using only a quadratic number of states in~$n$.

	By repeating this for each state~$p$,
	we manage to update the table from~$\tau_z$ to~$\tau_{z\sigma}$.
	Finally,
	we can update the table~$\gamma_z$.
	As observed before,
	it is sufficient to consider for each state~$q$,
	whether~$(p,q)\in\tau_{z\sigma}$ for some~$p\in\gamma_z$.
	This last step requires only a quadratic number of states
	and a linear number of visits to each cell.
	Combining the above-described subroutines,
	and taking into account that the state component
	which stores the head position has size~$\bigoof{n^2}$,
	we obtain a \dhm with~$\bigoof{n^6}$ states
	and~$\bigoof 1$ working symbols,
	whose number of visits to each tape cell
	is in~$\bigoof{n^5}$.
\end{proof}

We are now ready to state our main simulation.
\begin{theorem}
	\label{thm:2nfa->wrdtm}
	Every $n$-state \twnfa
	can be transformed into an equivalent
	halting \wrdtm
	of size polynomial in~$n$.
\end{theorem}
\begin{proof}
	Let~$\mA=\struct{Q,\ialphab,\delta,q_0,F}$
	be a \twnfa.
	We build a deterministic Turing machine
	that mimics the simulation of~\mA
	by a \owdfa:
	after reading any prefix~$z$ of an input~$w$,  
	the machine stores the tables~$\gamma_{\lend z}$ and~$\tau_{\lend z}$
	and finally checks the existence of a final state
	in~$\gamma_{\lend w\rend}$.
	The tables are stored on a suitable tape track
	and updated each time
	a further input symbol is read,
	using the method presented in \cref{lem:update-tables}.
	This can be achieved by switching between two tape tracks
	at each update of the tables.
	However, as the number of updates
	is linear in the length of the input,
	storing and updating the tables
	on a fixed part of length~$n+n^2$ of the tape
	would lead to a non-weight-reducing Turing machine.
	To handle this issue,
	for each prefix~$z$ of~$w$,
	we store the tables~$\gamma_{\lend z}$ and~$\tau_{\lend z}$
	on the~$n+n^2$ cells that precede the last position of~$z$.
	(%
		Remember that,
		in a \wrdtm,
		some initially-blank cells
		to the left of the initial segment
		are available.%
	)
	Thus,
	at each update of the tables made according to \cref{lem:update-tables},
	the tables are shifted one cell to the right.
	Hence,
	since a fixed cell may occur in~$n+n^2$ successive stored tables,
	the number of visits to each cell is in~$\bigoof{n^7}$,
	and the number of states is in~$\bigoof{n^6}$.
	We thus obtain a halting weight-reducing Turing machine equivalent to~\mA
	whose size is polynomial in the size of~\mA by \cref{lemma:constantly-many-visits}.
	Furthermore,
	the machine uses only~$n+n^2$ initially-blank cells,
	that are all to the left of the initial segment.
\end{proof}
\end{document}

\section{Simulating Two-way Automata by Hennie Machines}
\label{sec:2nfa->dhm}

In~\cref{sec:2nfa->wrotm}
we provided a polynomial size conversion
from \twnfa\s to \wrdtm\s.
The resulting machines use further tape cells,
besides the initial segment.
In this section we study
how to make such a simulation when the use of such extra space is not allowed,
namely
when we want to obtain a deterministic Hennie machine.
We show that a polynomial conversion still exists,
but we are not able to guarantee that the resulting machine is weight reducing.
Actually
this issue is related to ``short'' inputs,
namely to strings of length less than~$n^2$,
where~$n$ is the number of states of the given~\twnfa.
For such inputs we do not have enough tape space
to perform the simulation in~\cref{thm:2nfa->wrdtm}.
We will deal with them,
by using a different technique.

Let us start by considering acceptance of ``long'' inputs.
\begin{theorem}
	\label{thm:2nfa:space-n}
	\label{thm:2nfa->wrdhm long inputs}
	For each $n$-state \twnfa~\mA,
	there exists a \wrdhm~\mHM
	of size polynomial in~$n$
	which agrees with~\mA on strings of length at least~$n^2$.
\end{theorem}
\begin{proof}
	The technique used in the proof of \cref{thm:2nfa->wrdtm} 
	can be exploited,
	with slight modifications.
	Indeed,
	when recovering the tables corresponding to the ``short'' prefixes~$z$'s of the tape content,
	the \wrdtm machine resulting from the above construction
	uses up to~$n+n^2$ initially-blank cells to the left of the initial segment,
	that are not any longer available with a \wrdhm.
	By folding~$n$ of these cells
	on an additional track,
	as in the proof of \cref{lem:wrTm->wrHm},
	we can reduce this space amount to~$n^2$ cells by a polynomial size increasing.
	Hence, applying \cref{lem:wrTm->wrHm},
	we can obtain a \wrdhm which agrees with the original machine on inputs of length at least~$n^2$.
\end{proof}

In the case of \defd{unary} \twnfa\s,
namely working on a single-letter input alphabet,
the number of short inputs
that are not handled by \cref{thm:2nfa:space-n} is~$n^2$.
They can be managed in a read-only preliminary 
phase which uses~$\bigoof{n^2}$ states.
\begin{theorem}
	\label{cor:u2nfa:space-n}
	\label{cor:u2nfa->wrdhm}
	Every $n$-state unary \twnfa is equivalent to a \wrdhm
	of size polynomial in~$n$.
\end{theorem}
\begin{proof}
	We simulate a given unary \twnfa~\mA as follows.
	Let~$X$ be the finite set
	$\set{i<n^2}[i=\length w\text{ for some }w\in L]$.
	First, the head of our simulating \wrdhm is moved rightward
	to test whether the input is shorter than~$n^2$,
	using a counter from~$0$ to~$n^2-1$.
	In this case,
	the machine accepts \ifof the counter value belongs to~$X$.
	Otherwise,
	the head is moved back to the left endmarker
	and the simulation from \cref{thm:2nfa->wrdhm long inputs} is performed.
	With respect to the \wrdtm obtained from \cref{thm:2nfa->wrdhm long inputs},
	our device uses~$\bigoof{n^2}$ additional states and $\bigoof{1}$ extra working symbols.
	Hence, our construction yields a \wrdhm equivalent to~\mA
	of size polynomial in~$n$.
\end{proof}

In the nonunary case,
since the number of short strings is exponential in~$n$,
we cannot apply the same technique
as in~\cref{cor:u2nfa:space-n}.
However,
we are able to obtain a polynomial size Hennie machine
(not necessarily weight reducing),
using a different technique,
which is based on the analysis of the computation graph
of the simulated \twnfa.
\begin{theorem}
	\label{thm:2nfa:space-logn}
	\label{thm:2nfa->dhm}
	Each $n$-state \twnfa is equivalent to a \dhm of size polynomial in~$n$.
\end{theorem}
\begin{proof}
	Let~$\mA=\struct{Q,\alphab,\delta,q_0,F}$
	be a \twnfa,
	with~$\card Q=n$.
	Without loss of generality
	we suppose~$F=\set{q_f}$.
	Let~$w\in\alphab^{\ast}$ be an input word,
	with~${m=\length w}$.
	We distinguish three cases,
	depending on~$m$.
	Observe that the simulating~\dhm~\mHM can decide the case
	by performing a reading traversal of the input
	using a polynomial number of states.

	If~$m\geq n^2$,
	then~\mHM simulates~\mA as in \cref{thm:2nfa:space-n}.
	
	If~$m\leq\log n$,
	then~\mHM simulates a \owdfa
	with a polynomial number of states in~$n$
	(and in the number of input symbol which is assumed to be a fixed constant),
	which agrees with~\mA on all strings of length at most~$\log n$.

	Finally, if~$\log n<m<n^2$, then
	\mHM checks whether there is an accepting computation
	of~\mA on~$w$
	by analyzing the \emph{computation graph}~$G=\struct{V,E}$ of~\mA on~$w$,
	defined as follows.
	The set of vertices of~$G$ is~$V=Q\times\set{0,\ldots,m+1}\cup\set{(q_f,m+2)}$,
	where the pair~$(q,i)\in V$ corresponds to the configuration on input~$w$ in which~\mA is in the state~$q$
	while scanning the $i$-th symbol of the input tape.
	The edges in~$E$ represent single moves,
	\ie, there exists an edge from~$(q,i)$ to~$(p,j)$
	if and only if
	$(p,j-i)\in\delta(q,\ith{\extword w})$,
	where~$\extword w=\lend w\rend$.
	
	The simulating Hennie machine~\mHM should check
	the existence of
	a computation of~\mA
	starting from the initial state~$q_0$
	with the head on the left endmarker (\ie, at position~$0$)
	and
	ending in the unique final state~$q_f$
	after violating the right endmarker (\ie, at position~$m+2$).
	This is equivalent to
	check the existence of a path
	from the node~$(q_0,0)$
	to the node~$(q_f,m+2)$
	in~$G$.
	Let~$K=n(m+2)+1$ be the number of nodes in~$G$.
	If such a path exists,
	then there should exist one of length at most~$K$.
	Hence,
	checking the existence of an accepting computation
	reduces to checking the existence of a path of length at most~$K$ in~$G$.
	The recursive function \reachable
	is used to perform this checking
	by calling \reachable{$q_0,0,q_f,m+2,K$}.

	\begin{function}[bt]
		\caption{%
			reachable($p,i,q,j,T$): boolean\newline
			Checks the existence of a path
			from~$(p,i)$ to~$(q,j)$
			of length less than or equal to~$T$	
			in the graph of the configurations of a given~\twnfa on input~$w=w_1\cdots w_m$}
		\lIf{$(p,i)=(q,j)$}{\Return \KwTrue}\label{proc:reachable:len0}
		\lIf{$T = 0$}{\Return \KwFalse}\label{proc:reachable:unfeasable}
		\uIf{$T=1$}{\label{proc:reachable:direct_edge:start}
			\lIf{$(q,j-i)\in\delta(p,\ith{\extword w})$}{\Return \KwTrue}
		}\label{proc:reachable:direct_edge:end}
		\Else{
			\ForEach{$r,\ell \in Q\times\set{0,\ldots,m+1}$}{
				\label{proc:reachable:subpath:start}
				\If{\reachable{$p,i,r,\ell,\floor{T/2}$}}{
					\lIf{\reachable{$r,\ell,q,j,\ceil{T/2}$}}{\Return\KwTrue}
				}
				\label{proc:reachable:subpath:end}
			}
		}
		\Return \KwFalse
		\label{proc:reachable:fail}
	\end{function}
	Let us describe how a call of \reachable{$p,i,q,j,T$} works.
	The function has to check
	if there exists a computation from~$(p,i)$ to~$(q,j)$
	of length at most~$T$.
	This is done by	using a \emph{divide-and-conquer} technique
	as in the famous proof of the Savitch's Theorem~\cite{Sav70}.
	If~$(p,i)=(q,j)$,
	\ie,
	there is a path of length~$0$
	from~$(p,i)$ to~$(q,j)$,
	then the function returns \KwTrue
	independently of~$T$
	(Line~\ref{proc:reachable:len0}).
	Otherwise,
	if~$T=0$ but~$(p,i)\neq(q,j)$,
	then the function returns \KwFalse
	(Line~\ref{proc:reachable:unfeasable}),
	while,
	if~$T=1$,
	the function returns \KwTrue if there is a suitable edge in~$G$
	(%
		Line~
		\ref{proc:reachable:direct_edge:end}%
	).
	In order to verify that,
	\mHM saves~$(q,i)$ and~$(p,j)$ in its internal state
	and then,
	if the distance between~$i$ and~$j$ is~$1$,
	it moves its head to position~$i$,
	reads the symbol~$\ith{\extword w}$,
	and checks~$(q,j-i)\in\delta(p,\ith{\extword w})$.
	Notice that this read-only process
	uses only a number of states polynomial in~$n$.

	In the recursive case,
	for checking if there exists a path in the graph
	from~$(p,i)$ to~$(q,j)$
	of length at most~$T>1$,
	\mHM verifies whether there exists a node
	$(r,\ell)\in Q\times\set{0,\ldots,m+1}$
	such that
	there is a path from~$(p,i)$ to~$(r,\ell)$
	of length at most~$\floor*{\frac T 2}$
	and a path from~$(r,\ell)$ to~$(q,j)$
	of length at most~$\ceil*{\frac T 2}$
	(%
		Lines~\ref{proc:reachable:subpath:start}
		to~\ref{proc:reachable:subpath:end}%
	).
	This is done by trying all possible nodes~$(r,\ell)$
	until finding one satisfying
	these conditions.
	If it does not exist,
	then the procedure returns \KwFalse
	(Line~\ref{proc:reachable:fail}).
	
	Recursive calls to the function \reachable
	can be naturally saved on a pushdown store.
	More precisely,
	at the beginning of the simulation
	the store is empty.
	When a call to \reachable{$p,i,q,j,T$}
	is performed,
	the activation record,
	consisting of the parameters~$p$, $i$, $q$, $j$, and~$T$,
	is pushed
	on the top of the pushdown.
	Similarly,
	when \reachable returns,
	the activation record of the last call
	is popped off.
	The function \reachable uses
	seven variables,
	five of them being arguments
	saved on the pushdown store,
	and two
	being the local variables~$r$ and~$\ell$.
	As these two local variables
	are arguments of inner recursive calls,
	their values can be recovered
	when popping off the inner activation record
	(after the corresponding call has returned).
	Hence, the state components saving~$r$ and~$\ell$
	are freed at each recursive call.
	Therefore,
	all the checks performed by \reachable
	can be done with a number of states
	that is
	polynomial in~$n$,
	and using the pushdown alphabet~%
	$
		\left(Q \times \set{0,\ldots,n^2}\right)^2
		\times
		\set{0,\ldots,\ceil{\log K}}
	$
	of size polynomial in~$n$.

	Finally,
	notice that the maximum recursion depth 
	is%
	~$\ceil*{\log K}=\bigoof{\log n}$.
	The stack of recursion calls can be stored in a separated track on~$\log n$ tape cells,
	by using standard space compression techniques,
	that only induce a polynomial increase of the working alphabet.
	Since the input length~$m$ is larger than~$\log n$,
	\mHM has enough space in its initial segment.
	The number of visits to each cell is super-polynomial in~$n$,
	hence
	the machine is not weight-reducing.
	However,
	because the input lengths are bounded by~$n^2$,
	the number of visits to each cell is bounded by a number which only depends on~$n$.

	Considering also how the machine works on inputs of length at least~$n^2$,
	this allows us to conclude that
	the working time of the whole machine \mHM is linear in the input length.
\end{proof}

It is natural to ask if~\cref{thm:2nfa:space-logn} can be improved in order to obtain from a given \twnfa~\mA an equivalent \wrdhm of polynomial size. 
In the light of~\cref{thm:2nfa:space-n}, to do that it will be enough to obtain a \wrdhm of polynomial size which agrees with the \twnfa on ``short'' inputs.
With this respect, we point out that the problem of Sakoda and Sipser seems to be hard even when restricted to strings of length polynomial in the number of
states of~\mA~\cite{Kap14}.

\end{document}

\section{The One-Way Case}
\label{sec:1nfa->dhm}

In this section we restrict our attention to one-way automata simulations.
A natural question is to ask
if in the case of \ownfa\s
results stronger than those presented in \cref{sec:2nfa->dhm} can be achieved.
A simulation of \ownfa\s by \wrdhm\s
was studied in~\cite[Theorem~11]{Pru14},
claiming that each $n$-state \ownfa~$\mA$
has an equivalent \wrdhm~$\mHM$
of size polynomial in $n$.
Unfortunately,
the presented proof is incorrect
as it casts the problem of \mA acceptance
as the problem of reachability in an undirected computation graph.
Existence of a path connecting the initial and an accepting configuration
in such a graph
does not guarantee the existence of an accepting computation of $\mA$
since the path can include ``back'' edges.

By revising~\cite[Theorem~11]{Pru14}, 
we prove two weakened variants of this result.
In the first one,
the simulation holds only for long enough inputs.
The improvement with respect to \cref{thm:2nfa->wrdhm long inputs}
is that, in this case, short inputs are the strings
of length less than~$n$ rather than~$n^2$.
In the second variant,
we show that each \ownfa can be simulated
by a deterministic Hennie machine
which, however, is not weight-reducing.

Let us start by presenting the weight-reducing simulation for ``long'' inputs.
\begin{theorem}
	\label{thm:1nfa:space-n}
	\label{prop:1nfa->wrdhm long inputs}
	For each $n$-state \ownfa~\mA,
	there exists a \wrdhm~\mHM
	of size polynomial in~$n$
	which agrees with~\mA
	on strings of length at least~$n$.
\end{theorem}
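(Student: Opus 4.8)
The plan is to reuse the sliding-window construction behind \cref{thm:2nfa->wrdhm long inputs}, specialising the object that the window carries so that it has size $n$ rather than $n^2$; this is exactly what should reduce the threshold on short inputs from $n^2$ to $n$. Since \mA is one-way, the only information about a prefix $a_1\cdots a_i$ that is needed in order to continue the simulation is the \emph{reachability set} $R_i\subseteq Q$, the set of states in which \mA can be after reading that prefix from an initial state, where $Q$ is the $n$-element state set of \mA. This is a single subset of $Q$, and it evolves locally: $R_0$ is the set of initial states and $R_{i+1}$ is obtained from $R_i$ and the symbol $a_{i+1}$ by one application of the transition relation. By contrast, the two-way construction has to propagate a full transition profile, i.e.\ a relation on $Q$, which is why its window needs $\Theta(n^2)$ cells.

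First I would have \mHM store the characteristic vector of the current reachability set as one bit per cell across a block of exactly $n$ consecutive tape cells, so that the tape alphabet grows only by a constant factor (a bit, plus bookkeeping marks), keeping the size polynomial in~$n$. Processing the input from left to right, at the step that consumes $a_{i+1}$ the machine reads the whole block encoding $R_i$ together with the symbol $a_{i+1}$, computes $R_{i+1}$ by a Boolean matrix-vector product with the transition relation, and rewrites the block one cell to the right, so that the window slides along with the head. Because the window has width~$n$, each tape cell lies inside it for at most $n$ steps and is therefore visited only $\mathrm{poly}(n)$ times over the whole run, a bound depending on $n$ but not on the input length; this is precisely the Hennie condition. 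At the right end the machine accepts iff the final set $R_m$ meets the set of accepting states, so it agrees with \mA.

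To make \mHM weight-reducing I would fold a downward counter into the tape alphabet: each cell passes through a strictly decreasing sequence of weights as it is rewritten, and since every cell is rewritten only $\mathrm{poly}(n)$ times, a range of $\mathrm{poly}(n)$ weights suffices, leaving the alphabet, and hence the size of \mHM, polynomial in~$n$. Halting and the bounded number of rewrites then follow from the weight order, exactly as in the general construction, so no separate termination argument is needed.

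The main obstacle, and the reason agreement is claimed only for inputs of length at least~$n$, is the boundary: the width-$n$ block can be laid out only once $n$ cells have been consumed, whereas on a short prefix there is not yet room to hold a reachability set that may already contain all $n$ states. The delicate step is therefore the initialisation of the window over the first $n$ positions, keeping it consistent with the genuine sets $R_0,\dots,R_n$ while neither overwriting still-unread input nor running off the left endmarker. I expect to handle this with a dedicated start-up phase that is correct exactly when the input is long enough for the block to fit, which is what yields the stated threshold~$n$.
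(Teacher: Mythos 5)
Your core construction coincides with the paper's: maintain the powerset-construction set $\gamma_z$ of states reachable on the prefix read so far as an $n$-bit block on the tape, slide it one cell to the right per consumed input symbol using alternating tracks, note that each cell then hosts at most $n$ successive sets and each update touches it $\bigoof{n}$ times (so $\bigoof{n^2}$ visits per cell, independent of the input length), and derive weight-reduction from this bounded number of visits. All of that matches the paper, which packages the last step as \cref{lemma:constantly-many-visits} rather than as an explicit downward counter.

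The genuine gap is exactly the step you flag as delicate and then defer: the initialisation. Saying ``I expect to handle this with a dedicated start-up phase that is correct exactly when the input is long enough'' is not a proof — the first $n$ symbols of \emph{every} input, including long ones, must be processed somehow, and a reachability set can already be an arbitrary subset of $Q$ after one symbol, so it cannot be kept in the finite control or squeezed into the few cells consumed so far without an exponential blowup somewhere. The paper avoids a start-up phase altogether: it first builds a plain \dtm~\mTM that writes $\gamma_\emptyword$ on $n$ cells \emph{to the left of} the input segment and lets the window slide from there, recording three properties (polynomial size, polynomially many visits per cell, at most $n$ cells used outside the input segment); then \cref{lemma:constantly-many-visits} makes \mTM weight-reducing, and \cref{lem:wrTm->wrHm} converts the resulting \wrdtm into the \wrdhm~\mHM. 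The ``length at least $n$'' threshold is produced by this last conversion, which must fold the $n$ external cells into the input segment, not by any start-up logic of the simulation proper. Your inline variant can be repaired — for the first $n$ symbols keep the block \emph{statically} on spare tracks of cells $1,\dots,n$, which exist precisely when $\length{w}\geq n$, and only then start sliding — but that mechanism (or the paper's detour through a machine with external cells plus the two conversion lemmas) has to be spelled out, since it is the entire content of the claim about the threshold~$n$.
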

\begin{proof}
	In order to obtain~\mHM,
	we build a \dtm~\mTM equivalent to~\mA
	with the following properties:
	\begin{enumerate}[label={(P\arabic*)},ref={(P\arabic*)},nosep]
		\item\label{p:size}
			the size of~\mTM is bounded by a polynomial in the size of~\mA;
		\item\label{p:visits}
			the number of visits to each cell in any computation
			is bounded by a polynomial in~$n$;
		\item\label{p:extra}
			only~$n$ tape cells beside the initial segment are used,
			in any computation.
	\end{enumerate}
	Then, using \ref{p:visits} and \cref{lemma:constantly-many-visits},
	\mTM can be turned into an equivalent \wrdtm
	preserving \ref{p:size} and~\ref{p:extra}.
	Finally, using \ref{p:extra} and \cref{lem:wrTm->wrHm},
	the resulting \wrdtm is converted to the wanted \wrdhm~\mHM
	whose size is polynomial in the size of~\mA.

	We build~\mTM by adapting the classical powerset construction
	for converting \ownfa\s into \owdfa\s.
	In this standard construction,
	the simulating \owdfa scans the input word,
	while keeping updated, at each step,
	the subset of states
	that can be reached by~\mA
	from the initial configuration while reading input symbols.
	Our construction uses this technique but,
	instead of storing the subset in the finite control of the simulating machine,
	\mTM stores the successive subsets on the tape,
	exploiting the writing capability of~\wrdtm\s
	in order to avoid an exponential size blowup.

	Let~$\mA=\struct{Q,\alphab,\delta,q_0,F}$
	with~$Q=\set{q_0,q_1,\ldots,q_{n-1}}$.
	Following the notations of \cref{sec:2nfa->wrotm},
	for any input prefix~$z$,
	we consider the set~$\gamma_z\subseteq Q$
	of states that are reachable from the initial configuration
	after having read~$z$.
	In particular, $\gamma_\emptyword$ is the singleton~$\set{q_0}$
	and an input~$w$ is accepted by~\mA \ifof~$\gamma_w\cap F$ is nonempty.
	In order to store any subset~$\gamma_z$ of~$Q$ on the tape,
	we represent it as a word~$u\in\set{0,1}^n$,
	in which~$u_i$ is~$1$ \ifof~$q_i\in\gamma_z$.
	Ranging over input prefixes,
	\mTM iteratively computes~$\gamma_{z\sigma}$ from~$\gamma_z$,
	where~$\sigma\in\ialphab$ is the~$(\length{z}+1)$-th input symbol.
	To avoid erasing~$\gamma_z$
	when writing down~$\gamma_{z\sigma}$,
	\mTM uses two distinct tape tracks:
	$\gamma_z$ is read from the first track
	and~$\gamma_{z\sigma}$ is written on the second one
	(shifted by one cell to the right with respect to~$\gamma_z$).
	The role of the two tracks is inverted after each such update.

	At the beginning of the computation,
	\mTM writes~$\gamma_\emptyword$
	on the~$n$ cells preceding the initial segment
	(on one track, suppose the first one for ease of exposition).
	During the simulation,
	when \mTM reaches a position for the first time,
	the~$n$ tape cells preceding that position
	will contain the encoding of~$\gamma_z$ (on the first track)
	where~$z$ is the input prefix read so far.
	From such a point,
	\mTM is able to compute~$\gamma_{z\sigma}$,
	working on the tape portion of length~$n+1$
	that contains both~$\gamma_z$ and the current position containing~$\sigma$,
	and to which we refer as the \emph{current working segment}.
	Indeed, for each~$q_i\in Q$,
	we have~$q_i\in\gamma_{z\sigma}$
	\ifof $q_i\in\delta(q_j,\sigma)$ for some~$q_j\in\gamma_z$.
	Since~\mTM can test for each~$q_i,q_j\in Q$
	whether, on the one hand,~$q_j$ belongs to~$\gamma_z$
	(by visiting the~$j$-th cell of the current working segment
	which stores the~$j$-th bit of~$\gamma_z$ on the first track),
	and, on the other hand,~$q_i\in\delta(q_j,\sigma)$
	(by inspecting the transition table of~\mA),
	it can write~$\gamma_{z\sigma}$ on the tape.
	More precisely,
	when the two above conditions are satisfied,
	\mTM writes~$1$ on the~$(i+1)$-th cell of the current working segment (on the second track),
	for storing the~$i$-th bit of~$\gamma_{z\sigma}$.
	Notice that~$\gamma_{z\sigma}$ is shifted one cell to the right with respect to~$\gamma_z$,
	so that its last bit is written on the rightmost cell of the current working segment
	(which initially contains~$\sigma$).
	Furthermore, during the update procedure,
	each cell of the current working segment is visited~$\bigoof{n}$ times.
	Finally, when reaching the first cell to the right of the initial segment
	(containing the blank symbol),
	\mTM accepts \ifof~$\gamma_w$ contains a final state.
	By slightly modifying the above update procedure,
	we may assume that this information has been prepared in the finite control.
	Indeed, when computing~$\gamma_{z\sigma}$,
	\mTM can store in a state component
	whether~$\gamma_{z\sigma}$ intersects~$F$.

	The described approach implies the use of~$n$ extra tape cells
	to the left of the initial segment,
	on which \mTM has initially written~$\gamma_\emptyword$.
	No extra space is necessary,
	thus~\ref{p:extra} is satisfied.
	Moreover,
	the simulating machine uses~$\bigoof{1}$ working symbols,
	and~$\bigoof{\card{\ialphab}\cdot n^3}$ states
	for simultaneously storing the variables~$\sigma$, $q_i$, $q_j$,
	and the head position relative to the current working segment.
	This implies \ref{p:size}.
	Next, each tape cell is used for storing at most~$n$ successive subsets of states.
	Hence, its total number of visits is in~$\bigoof{n^2}$,
	yielding \ref{p:visits}.
%
\end{proof}

By \cref{thm:2nfa:space-logn},
every $n$-state \twnfa can be transformed into an equivalent \dhm of size polynomial in~$n$.
So, using the same result,
we can also transform~\ownfa\s.
For this particular case,
we now present a more direct simulation
which uses a technique similar to that of proof of \cref{thm:2nfa:space-logn}.
Let~$\mA=\struct{Q,\alphab,\delta,q_0,F}$ be an~$n$-state \ownfa.
At the cost of one extra state, we suppose~$F=\set{q_f}$.
Let~$w \in \Sigma^*$ be an input to~\mA.
Distinguish two cases by $m=\length w$.

If~$m\leq n$,
the machine~\mHM checks whether there is an accepting computation of~\mA
on~$w$ by calling $\reachOneWay(q_0,0,q_f,m)$,
a slightly modified version of the function $\reachable$ presented in the proof of \cref{thm:2nfa:space-logn},
adapted to deal with \ownfa\s.
\begin{function}[ht]
	\caption{reachableOneWay($p,i,q,j$): boolean\newline
		Checks the existence of a path
		from~$(p,i)$ to~$(q,j)$
		of length $j-i$	
		in the graph of the configurations of a given~\ownfa on input~$w=w_1\cdots w_m$}
	\lIf{$(p,i)=(q,j)$}{\Return \KwTrue}
	\lIf{$j-i=1$ \KwAnd $q\in \delta(p,w_j)$}{\Return \KwTrue}
	\If{$j-i>1$}
	{
		\ForEach{$r\in Q$}{
			\If{$\reachOneWay(p,i,r,\floor{(i+j)/2})$}{
				\lIf{$\reachOneWay(r,\ceil{(i+j)/2},q,j)$}{\Return\KwTrue}
			}
		}
	}
	\Return \KwFalse
\end{function}

The recursion depth is $\bigoof{\log m}$.
At each level of the recursion,
the function needs to store states~$p$, $q$, $r$ and indices~$i$, $j$.
This can be done in a tape cell
if~$\bigoof{n^5}$ working tape symbols are provided for this purpose.
The function runs in $n^{\bigoof{\log m}}$ time. 

If~$m>n$,
the Hennie machine~\mHM executes the computation
described in the proof of \cref{thm:1nfa:space-n}.

In conclusion,
the constructed machine~\mHM
has the number of states and working tape symbols polynomial in~$n$,
hence it is of size polynomial in~$n$.
Note that the number of transitions
performed by~\mHM over any tape field
is~$n^{\bigoof{\log n}}$.

\end{document}

\section{Conclusion}
\label{sec:conclusion}
Sakoda and Sipser raised the question of the cost of the elimination of nondeterminism from finite automata
exploiting the possibility of moving the head in both directions.
Though they conjectured that this cost is exponential,
we proved
that a determinization of polynomial size cost is possible
for some simulating machines which are deterministic and have some extra capabilities than \twdfa\s.
The extensions of \twdfa\s we considered are some special cases of one-tape Turing machines which are not more powerful than \twdfa\s,
namely they recognize the class of regular languages.

In \cref{thm:2nfa->dhm}
we showed such a polynomial determinization using Hennie machines.
However, this result is not fully satisfying
because of known drawbacks of \dhm\s:
on the one hand, it is not decidable
whether a Turing machine is actually a Hennie machine;
on the other hand, no recursive function bounds
the size blowup of the conversion of \dhm\s into \owdfa\s.
These drawbacks are avoided when moving to weight-reducing devices~\cite{GPPP21a}.

We do not know if the simulations of~\cref{thm:2nfa->dhm} can be changed
in such a way that the resulting \dhm is always weight-reducing
while keeping the size cost polynomial,
and we leave it as an open problem.
However, we solve this question in some particular cases.
Indeed, when equivalence is required only over long enough inputs,
or when the input alphabet is unary,
polynomial size determinizations using \wrdhm\s are possible
(\cref{thm:2nfa->wrdhm long inputs,cor:u2nfa->wrdhm,prop:1nfa->wrdhm long inputs}).
Moreover,
our main result states
that it is always possible to eliminate nondeterminism from finite automata
by using weight-reducing linear-time Turing machines
as long as they are not required to be end-marked (\cref{thm:2nfa->wrdtm}),
namely when the simulating machine is allowed to use extra tape beside the part that initially contains the input.

\bibliography{wrhm}
\end{document}